\newcommand{\SNR}{\mathrm{SNR}}
\newcommand{\rmin}{r_{\mathrm{min}}}
\newcommand{\rmax}{r_{\mathrm{max}}}
\newcommand{\rEarth}{r_{\oplus}}
\newcommand{\phiuser}{\phi_{\mathrm{u}}}
\newcommand{\phisat}{\phi_{\mathrm{s}}}
\newcommand{\Phisat}{\Phi_{\mathrm{s}}}
\newcommand{\thmin}{\theta_{\mathrm{min}}}
\newcommand{\Prob}{{\mathbb{P}}}
\newcommand{\Pc}{P_\mathrm{c}}
\newcommand{\PV}{P_\mathrm{V}}
\newcommand{\ps}{p_\mathrm{s}}
\newcommand{\phiu}{\phi_\mathrm{u}}
\newcommand{\Neff}{N_\mathrm{eff}}
\newcommand{\Nact}{N_\mathrm{act}}
\newcommand{\deltaeff}{\delta_\mathrm{eff}}
\newcommand{\deltaact}{\delta_\mathrm{act}}
\newtheorem{proposition}{Proposition}
\newtheorem{lem}{Lemma}
\newtheorem{cor}{Corollary}
\begin{document}
\setlength{\parskip}{0pt}
%\title{Modeling and Analysis of Massive Low Earth Orbit Satellite Constellations}
%\title{Stochastic Analysis of Satellite Broadband\\with Massive Inclined LEO Constellations}
\title{Stochastic Analysis of Satellite Broadband\\by Mega-Constellations with Inclined LEOs}
\author{Niloofar~Okati and Taneli~Riihonen\\
Faculty of Information Technology and Communication Sciences,
Tampere University, Finland\\
e-mail: \{{\tt niloofar.okati}, {\tt taneli.riihonen}\}{\tt @tuni.fi}}

\maketitle

\let\thefootnote\relax\footnotetext{This research work was supported by a Nokia University Donation.}

\begin{abstract}
As emerging massive constellations are intended to provide seamless connectivity for remote areas using hundreds of small low Earth orbit (LEO) satellites, new methodologies have great importance to study the performance of these networks. In this paper, we derive both downlink and uplink analytical expressions for coverage probability and data rate of an inclined LEO constellation under general fading, regardless of exact satellites' positions. Our solution involves two phases as we, first, abstract the network into a uniformly distributed network. Secondly, we obtain a new parameter,  effective number of satellites, for every user's latitude which compensates for the performance mismatch between the actual and uniform constellations. In addition to exact derivation of the network performance metrics, this study provides insight into selecting the constellation parameters, e.g., the total number of satellites, altitude, and inclination angle.
\end{abstract}

%\begin{IEEEkeywords}
%Low Earth orbit (LEO) constellations, massive communication satellite networks, coverage probability, data rate, SNR. 
%\end{IEEEkeywords}

%-----------------------------------------------------------------------
\section{Introduction}
\label{sec:intro}

A constellation of low Earth orbit (LEO) satellites can provide infrastructure for ubiquitous connectivity with low round-trip delay---compared to geostationary satellites---when terrestrial networks are not available or economically reasonable to deploy \cite{46,47}. Technological advancements along with the need for seamless connectivity have emerged the utilization of massive satellite networks and, consequently, the research around this topic.  
%Performance study of LEO networks has attracted the researchers' attention since it is crucial for the analysis and design of such networks.
%Despite the commercial promotion of these networks, the general study of their performance is still missing from the scientific literature.

The uplink outage probability in the presence of interference was evaluated for two LEO constellations through time-domain simulations in \cite{1}.
A performance study of Iridium constellation was presented in \cite{4} in terms of the distribution of the number of handoffs involved in a
single transaction duration and the average call drop probability. The effect of traffic non-uniformity was studied in \cite{38} by assuming hexagonal service areas for satellites.

%In \cite{32} and \cite{33}, performance analysis of the LEO satellite network has been investigated using stochastic reward net-based and stochastic Petri net models to simplify the characteristics of the LEO satellite networks.  A teletraffic analysis of a mobile satellite system for a LEO constellation was performed in \cite{5}.
A general expression for a single LEO satellite's visibility time was provided in \cite{8}, but it is incapable of concluding the general distribution of visibility periods for any arbitrarily positioned user. The deterministic model in \cite{8} was then developed by a statistical analysis of coverage time in mobile LEO during a satellite visit \cite{49}.
%In \cite{48}, a LEO-based Internet of Things (IoT) architecture is overviewed in order to provide network access for IoT devices distributed in remote areas.  
{In \cite{Afshang}, the Doppler shift magnitude of a LEO network is characterized for a single spotbeam by using tools from stochastic geometry.} Resource control of a hybrid satellite--terrestrial network was performed in \cite{51} with two objectives of maximizing the delay-limited capacity
and minimizing the outage probability.
A hybrid satellite--terrestrial network to assist 5G infrastructure has been analyzed by considering only one spotbeam~\cite{50,52}.

In the current literature around communication satellites' performance, the network analysis is limited to deterministic simulation-based studies, simplifying the network by considering specific constellations with a limited number of satellites, and assuming specific coverage footprints for satellites. Therefore, a comprehensive method that fits any constellation with arbitrary parameters is missing from the scientific literature. {In our recent study \cite{okati2020downlink}, downlink performance of a massive LEO constellation was investigated by assuming uniform distribution for satellites. However, the performance mismatch between actual and uniform constellations was compensated only through numerical mean absolute error minimization.}

In this paper, we provide a mathematical framework for downlink and uplink coverage probability and data rate analysis of an inclined LEO constellation under a general fading model. For our derivations, first, we assume the satellites are distributed uniformly on the orbital shell. Later, the mismatch between the actual and uniform constellations is compensated by deriving a new parameter as the effective number of satellites. Finally, the mathematical expressions are verified through simulations and the main findings of this paper are demonstrated for different network parameters, e.g., the total number of satellites, altitude and minimum elevation angle required for a satellite to be visible to the user. %The propagation environment is assumed to experience {Rician} fading. %which can accommodate a huge span of propagation scenarios by changing the shape parameter. 
The results obtained in this paper are scalable for numerous problems in massive satellite networks.
 
The organization of the remainder of this paper is as follows. Section~\ref{sec:sysmod} describes the system model for an inclined LEO constellation. As for the main results, in Sections~\ref{sec:performance}, we derive analytical expressions for coverage probability and average achievable data rate for a terrestrial user and introduce the concept of effective number of satellites. Numerical results are provided in Section~\ref{sec:Numerical Results} for studying the effect of key system parameters. Finally, we conclude the paper in Section~\ref{sec:Conclusion}.

%-----------------------------------------------------------------------
\section{System Model}
\label{sec:sysmod}

Let us consider a LEO communication satellite constellation, as shown in Fig.~\ref{fig:system_model}, that consists of $\Nact$ satellites, which are placed on low circular orbits with the same inclination angle and altitude denoted by $\iota$ and $\rmin$, respectively. The altitude parameter $\rmin$ has the subscript because it specifies also the minimum possible distance between a satellite and a user on Earth (that is realized when it is at the zenith).

User terminals are located on the surface of Earth that is approximated as a perfect sphere. We assume that wireless transmissions propagate to/from a user from/to all and only the satellites that are elevated above the horizon to an angle of $\theta_\mathrm{s}\geq \thmin$. Correspondingly, $\rmax$ denotes the maximum possible distance at which a satellite and a user may be able to communicate (that is realized when $\theta_\mathrm{s}=\thmin$), and
\begin{align}
\frac{\rmax}{\rEarth} = \sqrt{\frac{\rmin}{\rEarth}\left(\frac{\rmin}{\rEarth}+2\right) + \sin^2(\thmin)}-\sin(\thmin),
\end{align}
where $\rEarth \approx 6371$~km denotes Earth's radius. Conversely, the latitudes, where a terrestrial user may be able to establish connection with any satellite at all, are limited by
\begin{align}
\label{phi_max}
|\phiuser| \leq \iota + \cos^{-1}\left(\frac{\rEarth^2+\rEarth\rmin+\left(\rmin^2-\rmax^2\right)/2}{\rEarth(\rEarth+\rmin)}\right).
\end{align}
For instance, with satellite altitudes of $\rmin=500$~km and $\rmin=2000$~km, global coverage up to poles for $\thmin=0^{\circ}$ is possible only if $\iota > 68^{\circ}$ and $\iota > 49^{\circ}$, respectively. Using (\ref{phi_max}), the minimum altitude which provides global coverage is given as
\begin{align}
\rmin \geq \frac{\rEarth\cos{(\thmin)}}{\sin{(\iota-\thmin)}}-\rEarth.
\end{align}

Each user is associated with the nearest satellite that is referred to as the serving satellite in what follows. We assume that co-channel interference mitigation has been implemented properly so that the network performance becomes noise-limited. The distances from the user to the serving satellite and the other satellites are denoted by $r_0$ and $r_n$, \mbox{$n=1,2,\ldots,\Nact-1$}, respectively, while $G_0$ and $G_n$ represent the corresponding channel gains. Obviously, $G_n=0$ if $r_n > \rmax$ for some $n=0,1,\ldots,\Nact-1$.

Based on the above modeling, the signal-to-noise ratio (SNR) at the receiver can be expressed as
\begin{align}
\label{eq:SNR}
\SNR = \left\{
\begin{array}{ll}
\displaystyle
\frac{\ps G_0 r_0^{-\alpha}}{\sigma^2}, & r_0\leq\rmax,\\
0, & \text{otherwise},\\
\end{array} \right.
\end{align}
%\begin{align}
%\label{eq:1}
%\SNR = \frac{\ps G_0 R_0^{-\alpha}}{\sigma^2},
%\end{align}
where we assume that the user's receiver is subject to additive white Gaussian noise with constant power $\sigma^2$, and the parameter $\alpha$ is a path loss exponent.

%------------------------
\begin{figure}[t]
\begin{centering}
\begin{tikzpicture}
%\draw[cyan,fill](-4,0) circle[radius=0.4] node (S) [black] {S};
%\draw[](0,0) circle[radius=1.9] node (E) [inner sep = 0,black] {};
%\draw[](0,0) circle[radius=2.8] node (S) [inner sep = 0,black] {};
%\draw[black] (3.5,0) arc (0:180:3.5);
\draw [blue, domain=-3.5: 3.5, samples=200] plot(\x, {sqrt(3.5^2-\x*\x)});
\draw [blue, domain=-2.5: 2.5, samples=200] plot(\x, {sqrt(2.5^2-\x*\x)});
%\draw[black] (2.5,0) arc (0:180:2.5);
\draw[black,dashed](0,0) ellipse (2.5cm and 0.4cm) node (E) [inner sep = 0,blue] {};
\draw[black,dashed](0,0) ellipse (3.5cm and 0.4cm) node (S) [inner sep = 0,blue] {};
%\draw[dashed](0,2.36) ellipse (1.5cm and 0.15cm) node (O1) [black] {};

\draw[blue,fill](-1.3,3.25) circle[radius=0.05] node (Serv) [inner sep = 0,black] {};
\draw[blue](-1.3,3.25) circle[radius=0.09] node (Serv2) [black] {};

\draw[blue,fill](1.65,3.1) circle[radius=0.05] [inner sep = 0,blue] {};
\draw[blue,fill](-1.9,2.95) circle[radius=0.05] [inner sep = 0,blue] {};
%\draw[red,fill](-2.1,2.8) circle[radius=0.05] node (Intfr1) [black] {};
%\draw[red,fill](2.2,2.7) circle[radius=0.05] node (Intfr2) [black] {};
 \node [inner sep = 0,draw, fill, shape=rectangle, minimum width=0.1cm, minimum height=0.1cm,  anchor=center] at (2.2,2.7) {};

 \node [inner sep = 0,draw, fill, shape=rectangle, minimum width=0.1cm, minimum height=0.1cm, anchor=center] at (3.2,1.4) {};
%\node [inner sep = 0,draw, fill, shape=rectangle, minimum width=0.1cm, minimum height=0.1cm, anchor=center] at (2.75,-0.5) {};
 %\node [inner sep = 0,draw, fill, shape=rectangle, minimum width=0.1cm, minimum height=0.1cm, anchor=center] at (-2,-1.96) {};
 %\node [inner sep = 0,draw, fill, shape=rectangle, minimum width=0.1cm, minimum height=0.1cm, anchor=center] at (2.5,-1.26) {};
 %\node [inner sep = 0,draw, fill, shape=rectangle, minimum width=0.1cm, minimum height=0.1cm, anchor=center] at (0.25,-2.79) {};
 \node [inner sep = 0,draw, fill, shape=rectangle, minimum width=0.1cm, minimum height=0.1cm,  anchor=center] at (-2.79,2.1) {};
 \node [inner sep = 0,draw, fill, shape=rectangle, minimum width=0.1cm, minimum height=0.1cm,  anchor=center] at (-3.3,1.2) {};
%%\draw[black,fill](-2,1.96) circle[radius=0.05] node (Intfr) [black] {};
%\draw[black,fill](2.75,-0.5) circle[radius=0.05] node (Intfr) [black] {};
%\draw[black,fill](-2,-1.96) circle[radius=0.05] node (Intfr) [black] {};
%\draw[black,fill](2.5,-1.26) circle[radius=0.05] node (Intfr) [black] {};
%\draw[black,fill](0.25,-2.79) circle[radius=0.05] node (Intfr) [black] {};
%\draw[black,fill](-2.79,0.25) circle[radius=0.05] node (Intfr) [black] {};
 \def\r{3.5}
    \def\H{2.5}
  \begin{scope}
    \clip 
   %({\r*cos(-90)},{\r*sin(-90)}) arc [start angle=-90,end
  %  angle=270,radius=\r];
    [domain=-3.5: 3.5] plot(\x, {sqrt(3.5^2-\x*\x)});
   %\shade[ball color=green,opacity=0.5] (0,-2) circle (\r);
   \shade[top color=blue!50!white,bottom color=blue!20!white,opacity=0.4] 
 ({-\r},{0.9*\r}) rectangle ++({1.5*\r},{0.1*\r+\H});%size cap 0.83
    \fill[blue!25] (0,3.2) circle [x radius={1.4},
   y radius={0.05}];
    \end{scope}

 \node[green!50!black,fill] (user) [inner sep = 0,draw, shape=rectangle, minimum width=0.07cm, minimum height=0.07cm, anchor=center] at (0,2.5) {};%user
\node[green!50!black]  [inner sep = 0,draw, shape=rectangle, minimum width=0.13cm, minimum height=0.13cm, anchor=center] at (0,2.5) {};%user
\node[text width=2.5cm] at (0.9,2.2)     {User};

\draw[blue,fill](0,5) circle[radius=0.05] ;
\draw[blue](0,5) circle[radius=0.09] ;
\node[] at (2,5) {Serving (nearest) satellite};

 \node [inner sep = 0,draw, fill, shape=rectangle, minimum width=0.1cm, minimum height=0.1cm, anchor=center] at (0,4.2) {};
\node[] at (1.7,4.2) {Non-visible satellites};

\draw[blue,fill](0,4.6) circle[radius=0.05] ;
\node[] at (1.8,4.6) {Other visible satellites};

%\draw[line width=0.3 mm, black,dashed] (-3.5,2.5) -- (-2.5,2.5);
\draw[line width=0.3 mm, black,dashed] (0,2.5) -- (2.5,2.5);
\draw[inner sep = 0,line width=0.2 mm, black,->] (Serv) -- (user);
%\draw[inner sep = 0,line width=0.1 mm, black] (S) -- (user);
%\draw[line width=0.1 mm, black] (S) -- (Serv);
\draw[line width=0.1 mm, line width=0.2 mm,black,<->] (user) -- (2.0,2.9) ;
\draw[line width=0.1 mm, black,<->] (-2.65,2.3) -- (-1.8,1.7);
\node[text width=3cm,rotate=-30] at (-1.1,1.5)    {${\rmin}$};
\node[text width=3cm,rotate=-30] at (0.6,2.3)    {${r_0}$};
\node[text width=3cm,rotate=10] at (2.3,3.1)    {${\rmax}$};
\draw (1.1,2.5) arc (0:25:0.5) node (arcc) [inner sep = 0]{};
\draw[inner sep = 0,line width=0.2 mm, black,<->] (user) -- (-2.5,2.5);
\draw[inner sep = 0, line width=0.1 mm, black,<->] (E) -- (2.5,0);
\node[text width=3cm] at (2.5,0.15)     {\textit{$\rEarth$}};
\draw[bend left,<->]  (1.1,2.6) to (2,2);
\node[text width=3cm] at (3.4,1.8)     {$\thmin$};
%\draw (0.28,0.43) arc (42.73:80:0.5);
%\node[text width=3cm] at (1.6,0.8)    {\textit{$\theta$}};
%\draw[inner sep = 0,line width=0.2 mm, black,<->] (0,3) -- (Serv);
%\draw[inner sep = 0,line width=0.2 mm, black,<->] (0,3.5) -- (0,3);
%\node[text width=3cm] at (2.2,3.15)     {\textit{$a$}};
%\node[text width=3cm] at (1.2,3.3)     {\textit{$b$}};
 ambiguity
\end{tikzpicture}
\par
\end{centering}
\caption{A sketch of the considered system model, where satellites are distributed uniformly over the inclined orbits.}
\label{fig:system_model} 
\end{figure}
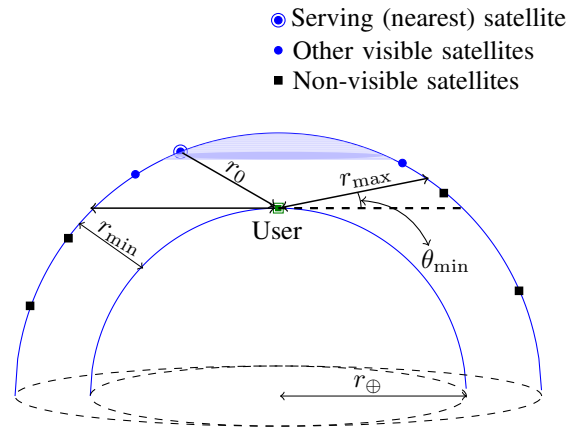
%------------------------

%-----------------------------------------------------------------------
\section{Performance Analysis}

\label{sec:performance}

%---------------------------------------------------------------
%\subsection{Distance Distributions}
% maybe a separate section is unnecessary for these

In order to contribute expressions for coverage probability and average achievable rate of the satellite constellation described in Section~\ref{sec:sysmod}, first, we assume that $N$ satellites are distributed uniformly on a sphere with radius $\rEarth+\rmin$. We will shortly compensate for the performance mismatch generated by the distribution difference between the uniform model and the practical constellations. 
%when setting $N=\Nact$.

First, we need to characterize some basic distance distributions that stem from the geometry of the considered system. In particular, we express the necessary cumulative distribution function (CDF) and probability density functions (PDFs) in the following lemmas.

\begin{lem}
\label{lem:pdf_r0}
The PDF of the serving distance $R_0$ is given by 
\begin{align}
\label{eq:pdf_r0}
f_{R_0}\left(r_0\right) = N\left(1-\frac{r_0^2-\rmin^2}{4\rEarth(\rEarth+\rmin)}\right)^{N-1} \frac{r_0}{2\rEarth(\rEarth+\rmin)}
\end{align}
for $r_0 \in [\rmin, 2\rEarth+\rmin]$ while $f_{R_0}\left(r_0\right)=0$ otherwise.
\end{lem}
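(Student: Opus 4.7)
The plan is to compute the single-satellite distance distribution first, then invoke order statistics to obtain the distribution of the minimum distance over $N$ satellites.

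First I would parametrize a single uniformly-distributed satellite's position by the central angle $\Theta$ between the user's radial direction and the satellite's radial direction. Since the surface area element on a sphere is $\sin\theta\,d\theta\,d\phi$, uniformity implies that $\cos\Theta$ is uniformly distributed on $[-1,1]$. The law of cosines applied to the triangle with sides $\rEarth$, $\rEarth+\rmin$ and $R$ (with included angle $\Theta$) gives
\begin{equation*}
R^2 = \rEarth^2 + (\rEarth+\rmin)^2 - 2\rEarth(\rEarth+\rmin)\cos\Theta,
\end{equation*}
so $\Theta=0$ yields the minimum distance $R=\rmin$ and $\Theta=\pi$ yields the maximum $R=2\rEarth+\rmin$, which is exactly the support claimed in the lemma.

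Next I would derive the CDF of $R$ by solving the above for $\cos\Theta$ and using uniformity:
\begin{equation*}
F_R(r) = \Prob\!\left(\cos\Theta \geq \tfrac{\rEarth^2+(\rEarth+\rmin)^2-r^2}{2\rEarth(\rEarth+\rmin)}\right) = \tfrac{1}{2}\!\left(1 - \tfrac{\rEarth^2+(\rEarth+\rmin)^2-r^2}{2\rEarth(\rEarth+\rmin)}\right).
\end{equation*}
The key algebraic simplification is recognizing that $2\rEarth(\rEarth+\rmin) - \rEarth^2 - (\rEarth+\rmin)^2 = -\rmin^2$, which collapses the expression cleanly to $F_R(r) = (r^2-\rmin^2)/[4\rEarth(\rEarth+\rmin)]$ for $r\in[\rmin,2\rEarth+\rmin]$, and hence $f_R(r) = r/[2\rEarth(\rEarth+\rmin)]$.

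Finally, I would treat the $N$ satellite-to-user distances as i.i.d. copies of $R$ (a direct consequence of uniform distribution on the sphere), so that $R_0 = \min\{R^{(1)},\dots,R^{(N)}\}$ has CDF $F_{R_0}(r_0) = 1 - (1-F_R(r_0))^N$. Differentiating yields
\begin{equation*}
f_{R_0}(r_0) = N\bigl(1-F_R(r_0)\bigr)^{N-1} f_R(r_0),
\end{equation*}
which, after substituting the expressions above, gives exactly \eqref{eq:pdf_r0}. The only subtle step is the first one (arguing $\cos\Theta$ is uniform rather than $\Theta$), since the remainder is algebraic simplification and a standard order-statistics calculation.
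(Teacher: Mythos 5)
Your proposal is correct and follows essentially the same route as the paper: the paper derives $F_R$ by computing the fraction of the orbital shell's area lying in the spherical cap within distance $r$ of the user, which is exactly your statement that $\cos\Theta$ is uniform (the cap area being proportional to $1-\cos\Theta$), and then both arguments apply the same minimum-order-statistic step $F_{R_0}=1-(1-F_R)^N$. Your version simply makes explicit, via the law of cosines and the $-\rmin^2$ cancellation, the geometric relationship the paper leaves as ``finding a relationship between $A_{\text{cap}}$ and $R$.''
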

\begin{proof}
We first need to derive the CDF of 
the distance $R$ from any specific one of the satellites in the constellation to the user. From basic geometry, the CDF of the surface area of the shaded spherical cap $A_{\text{cap}}$, formed by any satellite at distance $R$ from the user, in Fig.~\ref{fig:system_model} is $F_{A_{\text{cap}}}\left(a_{\text{cap}}\right)=\frac{a_{\text{cap}}}{4\pi (\rEarth+\rmin)^2}$. Finding a relationship between $A_{\text{cap}}$ and $R$, gives the distribution as
\begin{equation}
\label{eq:cdf_r}
F_{R}\left(r\right) %\triangleq \Prob\left(R \leq r\right)
= \left\{
\begin{array}{ll}
0, & r<\rmin,\\
\frac{r^2-\rmin^2}{4\rEarth(\rEarth+\rmin)}, &\rmin\leq r \leq2\rEarth+\rmin,\\
1,&r>2\rEarth+\rmin,\\
\end{array} \right.
\end{equation}
and the corresponding PDF is given by  
\begin{equation}
\label{eq:pdf_r}
f_{R}\left(r\right) = \frac{r}{2\rEarth(\rEarth+\rmin)}
\end{equation}
for $r_0 \in [\rmin, 2\rEarth+\rmin]$ while $f_{R}\left(r\right)=0$ otherwise. Due to the channel assignment by which the serving satellite is the nearest one among all the $N$ i.i.d.\ satellites, the CDF of $R_0$ can be expressed as $F_{R_0}\left(r_\text{0}\right)\triangleq\Prob\left(R_0 \leq r_\text{0}\right) = 1 - \left(1-F_{R}\left(r_\text{0}\right)\right)^N$ and, by differentiation, its PDF is $f_{R_0}\left(r_\text{0}\right)= N \left(1-F_{R}\left(r_\text{0}\right)\right)^{N-1}f_r(r)$ which will result in Lemma~\ref{lem:pdf_r0} by substitution from (\ref{eq:cdf_r}) and \eqref{eq:pdf_r}.
\end{proof}

\subsection{Coverage Probability}

In this subsection, we derive the coverage probability of the LEO satellite network for a user in an arbitrary location on Earth. The performance measure of coverage probability is defined as the probability of having 
%\begin{align}
%\label{eq:3}
%\Pc\left(T\right) \triangleq %\Prob\left(\SNR > T\right),
%\end{align}
%where $T$ represents the 
at least minimum SNR required for successful data transmission. In other words, whenever the SNR of the considered user from its nearest satellite is above the threshold level $T>0$, it is considered to be within the coverage of the satellite communication network.

%\begin{align}
%F_{R}\left(R\right)=\Prob\left(R<r\right)=&\frac{r^2-\rmin^2}{4D\left((\rEarth+\rmin)-\rmin\right)}
%\end{align}
%\begin{lem}
%The PDF of $R$ is
%\begin{equation}
%\label{eq:10}
%f_{R}\left(r\right)= 
%\frac{dF_{R}\left(r\right)}{dr}=
%\left\{
%\begin{array}{ll}
%\frac{r}{2\rEarth(\rEarth+\rmin)},&\rmin\leq r \leq %2\rEarth+\rmin,\\
%0,&otherwise.\\
%\end{array} \right.
%\end{equation}
%\begin{proof}
%The PDF can be derived by differentiating \eqref{eq:9} with respect to $r$.
%\end{proof}
%\end{lem}
%\subsubsection{Distance to the nearest Satellite}
%An important parameter to be obtained is the distance between the user and the serving satellite $R_0$, which is the minimum in $R$ set.  

%\subsection{Coverage Probability under Rayleigh Fading Channel}

\begin{proposition}
\label{prop:coverage_prob}
The probability of network coverage for an arbitrarily located user under general fading is
\begin{align}
\label{eq:13}
\nonumber
\Pc\left(T\right)
&\triangleq \Prob\left(\SNR > T\right)\\
\nonumber
&= \frac{N}{2\rEarth(\rEarth+\rmin)}\int_{\rmin}^{\rmax}\left(1-F_{G_0}\left(\frac{Tr_0^{\alpha}\sigma^2}{\ps}\right)\right)\\
&\phantom{=}\times \left(1-\frac{r_0^2-\rmin^2}{4\rEarth(\rEarth+\rmin)}\right)^{N-1}r_0\,dr_0,
\end{align}
where $F_{G_0}(\cdot)$ is the CDF of the channel gain $G_0$.
%and 

%is the satellite visibility probability.

\end{proposition}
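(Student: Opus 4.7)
The plan is to compute $\Pc(T) = \Prob(\SNR > T)$ by conditioning on the serving distance $R_0$ and then invoking Lemma~\ref{lem:pdf_r0}. The key observation from (\ref{eq:SNR}) is that, since $T>0$, the event $\{\SNR>T\}$ forces $R_0 \leq \rmax$; otherwise $\SNR = 0$ and the event is impossible. Thus I would first write
\begin{align*}
\Pc(T) = \int_{\rmin}^{\rmax} \Prob\!\left(\SNR > T \,\Big|\, R_0 = r_0\right) f_{R_0}(r_0) \, dr_0,
\end{align*}
where the integration range is restricted to $[\rmin,\rmax]$ precisely for the reason above.

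Next I would handle the conditional probability. Given $R_0 = r_0 \leq \rmax$, the SNR reduces to $\ps G_0 r_0^{-\alpha}/\sigma^2$, so the threshold condition becomes $G_0 > T r_0^\alpha \sigma^2/\ps$. Assuming (as is standard and implicit in the fading model) that the fading gain $G_0$ is independent of the serving distance $R_0$, this conditional probability is exactly the complementary CDF,
\begin{align*}
\Prob\!\left(\SNR > T \,\Big|\, R_0 = r_0\right) = 1 - F_{G_0}\!\left(\frac{T r_0^\alpha \sigma^2}{\ps}\right).
\end{align*}

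Finally, I would substitute the serving-distance density from Lemma~\ref{lem:pdf_r0},
\begin{align*}
f_{R_0}(r_0) = N \left(1 - \frac{r_0^2 - \rmin^2}{4\rEarth(\rEarth+\rmin)}\right)^{N-1}\!\!\frac{r_0}{2\rEarth(\rEarth+\rmin)},
\end{align*}
into the integrand and pull the constant $N/[2\rEarth(\rEarth+\rmin)]$ outside. This yields exactly (\ref{eq:13}).

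The proof is essentially a direct application of the law of total probability, so there is no serious technical obstacle. The only points requiring care are (i) correctly truncating the integration at $\rmax$ rather than $2\rEarth+\rmin$, which reflects the visibility constraint encoded in the piecewise definition of $\SNR$, and (ii) making explicit that $G_0$ is assumed independent of $R_0$ so that the conditional probability collapses into $1 - F_{G_0}$; both are straightforward but should be stated clearly.
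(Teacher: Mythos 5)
Your proposal is correct and follows essentially the same route as the paper's proof: condition on the serving distance, truncate the integral at $\rmax$ because of the visibility constraint, rewrite the conditional event as $G_0 > T r_0^\alpha \sigma^2/\ps$, and substitute the density from Lemma~\ref{lem:pdf_r0}. Your explicit remarks on the independence of $G_0$ and $R_0$ and on why the upper limit is $\rmax$ rather than $2\rEarth+\rmin$ are welcome clarifications but do not change the argument.
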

\begin{proof}
To obtain \eqref{eq:13}, we start with the definition of coverage probability:
\begin{align}
\label{eq:16}
\nonumber
&\Pc\left(T\right)=%\Prob\left(\Nvis>0\right)
\mathbb{E}_{R_0}\left[\Prob\left(\SNR>T |R_0\right)\right]\\\nonumber
&=%\Prob(\Nvis>0)
\int_{\rmin}^{\rmax}\Prob\left(\SNR>T |R_0=r_0\right)f_{R_0}\left(r_0\right)dr_0\\\nonumber
% \,\Prob(\Nvis>0)
&=\frac{N}{2\rEarth(\rEarth+\rmin)}\int_{\rmin}^{\rmax}\Prob\left(G_0>\frac{Tr_0^{\alpha}\sigma^2}{\ps} \right)\\
&\hspace{80pt}\times\left(1-\frac{r_0^2-\rmin^2}{4\rEarth(\rEarth+\rmin)}\right)^{N-1}r_0\,dr_0.
\end{align}
%The probability to have at least one visible satellite, $\Prob(\Nvis>0)$, is simply the complementary probability of observing no visible satellite which can be expressed as $\left(1-\left(1-\PV\right)^{{N}}\right)$. 
The upper limit for the integral is due to the fact that the satellites with smaller than $\thmin$ elevation angle have no visibility to the user.  %Invoking the assumption regarding the distribution of $G_0$, the expression in \eqref{eq:13} is obtained by substituting complementary CDF of gamma distribution in \eqref{eq:16}. 
\end{proof}

%It is clear that $P_0=1$ for $r_0 > \rmax$. 
%In the special case of $r_0 > \rmax$, all satellites are below horizon and $\SINR=\SNR=0$.
The channel characteristics have no effect on the maximum achievable coverage as it is affected only by the geometry of the system model. The following corollary provides the upper bound for coverage probability using Proposition~\ref{prop:coverage_prob}.

\begin{cor}
\label{cor:upperbound}
Setting $T=0$, Proposition~\ref{prop:coverage_prob} leads to an upper bound for coverage probability as  
\begin{align}
\label{eq:28}
\Pc\left(T\right) &\leq F_{R_0}\left(\rmax\right)-F_{R_0}\left(\rmin\right)=1-\left(1-\PV\right)^{{N}},
%&=
%P_0 &\triangleq \Prob(N_V=0) = \left(1-\PV\right)^{{N}}\\
%1-\left(1-\frac{\rmin-\rmax\sin(\thmin)}{2(\rEarth+\rmin)}\right)^{{N}}
\end{align}
where $\PV$ is the visibility probability of satellites to the user and is expressed as 
\begin{align}
\label{eq:pv}
\PV = \frac{\rmin-\rmax\sin(\thmin)}{2(\rEarth+\rmin)}.
\end{align}
The expression in \eqref{eq:pv} can be directly obtained as the surface area of the spherical cap, where visible satellites can reside, to the total surface area of the satellites' sphere since the satellites are uniformly distributed.
\end{cor}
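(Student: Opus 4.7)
My plan is to derive the upper bound in two small steps and then reconcile the two equivalent closed forms for $\PV$. First I would bound the fading factor in \eqref{eq:13} by its trivial maximum: since $F_{G_0}$ is the CDF of a nonnegative random variable and $Tr_0^{\alpha}\sigma^2/\ps\geq 0$, we have $1-F_{G_0}(Tr_0^{\alpha}\sigma^2/\ps)\leq 1$, which is precisely what formally substituting $T=0$ into \eqref{eq:13} produces in the integrand. Inserting this bound into the integral of Proposition~\ref{prop:coverage_prob} yields $\Pc(T)\leq \int_{\rmin}^{\rmax} f_{R_0}(r_0)\,dr_0 = F_{R_0}(\rmax)-F_{R_0}(\rmin)$, which is the first equality in \eqref{eq:28}.

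Next I would evaluate this difference using Lemma~\ref{lem:pdf_r0}. By the order-statistic identity $F_{R_0}(r)=1-(1-F_{R}(r))^{N}$ used in the proof of that lemma, together with $F_{R}(\rmin)=0$ from \eqref{eq:cdf_r}, the bound collapses to $1-(1-F_{R}(\rmax))^{N}$, so it remains only to identify $F_{R}(\rmax)$ with $\PV$.

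For this identification I would follow the geometric route the corollary suggests. Placing the user at $(0,0,\rEarth)$ with the Earth's centre at the origin, the set of satellite positions visible to the user is a spherical cap on the orbital sphere of radius $\rEarth+\rmin$, with apex at the user's zenith and boundary traced out by satellites at elevation exactly $\thmin$. A satellite on this boundary has vertical coordinate $\rEarth+\rmax\sin(\thmin)$, so the cap has axial height $h=\rmin-\rmax\sin(\thmin)$ and surface area $2\pi(\rEarth+\rmin)h$; by uniformity, the ratio of this area to the full sphere area $4\pi(\rEarth+\rmin)^{2}$ yields exactly \eqref{eq:pv}. That the algebraic form $F_{R}(\rmax)=(\rmax^{2}-\rmin^{2})/(4\rEarth(\rEarth+\rmin))$ agrees with \eqref{eq:pv} then follows directly from the defining relation for $\rmax$ in the system model: squaring it gives $\rmax^{2}+2\rEarth\rmax\sin(\thmin)=\rmin^{2}+2\rEarth\rmin$, equivalently $\rmax^{2}-\rmin^{2}=2\rEarth(\rmin-\rmax\sin(\thmin))$, which is the required identity.

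There is no real obstacle here---the probabilistic content is essentially a single monotonicity observation combined with Lemma~\ref{lem:pdf_r0}, and the only care needed is in reconciling the two equivalent closed forms for $\PV$. I would present the algebraic form first, since it drops out of the calculation automatically, and then note that the same quantity admits the more transparent geometric interpretation in \eqref{eq:pv}.
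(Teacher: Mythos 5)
Your proposal is correct and follows essentially the same route as the paper, which justifies the bound by setting $T=0$ in Proposition~\ref{prop:coverage_prob} and identifies $1-(1-\PV)^{N}$ as the probability of seeing at least one of $N$ uniformly distributed satellites, with $\PV$ read off as the spherical-cap area ratio. The only addition is your explicit algebraic check that $F_{R}(\rmax)=(\rmax^{2}-\rmin^{2})/(4\rEarth(\rEarth+\rmin))$ coincides with \eqref{eq:pv} via the defining relation for $\rmax$, a detail the paper leaves implicit but which you verify correctly.
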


Since the number of visible satellites is a binomial random variable with success probability $\PV$, the coverage probability is upper bounded by the probability of observing at least one satellite by the user.

%When the channels experience Nakagami-$m$ fading, the coverage probability can be expressed as 
%\begin{align}
%\label{eq:28}
%\nonumber
%\Pc\left(T\right)&= \frac{N}{2\rEarth(\rEarth+\rmin)}\int_{\rmin}^{\rmax}\frac{\Gamma(m,\frac{m{Tr_0^{\a%lpha}\sigma^2}{}
%}{\ps})}{\Gamma(m)}\\
%&\times \left(1-\frac{r_0^2-\rmin^2}{4\rEarth(\rEarth+\rmin)}\right)^{N-1}r_0\,dr_0.
%\end{align}
%for $r_0 \leq \rmax$, and $P_0=1$ when $r_0 > \rmax$. In \eqref{eq:28}, $\Gamma(a,x)=\int_x^\infty y^{a-1}e^{-y}dy$ denotes the upper incomplete gamma function and $\Gamma(a)=\int_0^\infty y^{a-1}e^{-y}dy$ represents (complete) gamma function. 
%%In the special case of $r_0 > \rmax$, all satellites are below horizon and $\SINR=\SNR=0$.

\subsection{Average Data Rate}

In this subsection, we focus on the average achievable data rate. The average achievable rate (in bit/s/Hz) 
%is defined as
%\begin{align}
%\label{eq:45}
%\bar{C} \triangleq\mathbb{E}\left[\log_2\left(1+\SNR\right)\right],
%\end{align}
%which 
states the ergodic capacity from the Shannon--Hartley theorem over a fading communication link normalized to the bandwidth of $1$~Hz.
We can calculate the expression for the average rate of an arbitrary user over generalized fading channels as follows. It is worth noting that the average is taken over both serving distance and fading distributions.
%We derive the data rate in the following theorem.

\begin{proposition}
\label{prop:data_rate}
The average rate (in bits/s/Hz) of an arbitrarily located user and its serving satellite under general fading assumption is
\begin{align}
\nonumber
\bar{C} &\triangleq\mathbb{E}\left[\log_2\left(1+\SNR\right)\right]=\frac{N}{2 \ln(2)\rEarth(\rEarth+\rmin)}\\\nonumber
&\times\int_{\rmin}^{\rmax}\int_{0}^{\infty}\ln\left(1+\frac{\ps g_0r_0^{-\alpha}}{\sigma^2}\right)f_{G_0}(g_0)\\
&\hspace{45pt}\times\left(1-\frac{r_0^2-\rmin^2}{4\rEarth(\rEarth+\rmin)}\right)^{N-1}r_0\,dg_0\,dr_0,
\end{align}
where $f_{G_0}(g_0)$ represents the PDF of channel gain $G_0$.
\end{proposition}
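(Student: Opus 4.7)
The plan is to expand the definition of ergodic capacity via iterated expectation over the serving distance $R_0$ and the channel gain $G_0$, then substitute the PDF of $R_0$ obtained in Lemma~\ref{lem:pdf_r0}. First I would write
\begin{align*}
\bar{C} = \mathbb{E}\!\left[\log_2\!\left(1+\SNR\right)\right]
= \mathbb{E}_{R_0}\!\left[\mathbb{E}_{G_0}\!\left[\log_2\!\left(1+\SNR\right)\,\Big|\,R_0\right]\right],
\end{align*}
using independence of $G_0$ from $R_0$ to justify the inner conditional expectation being an ordinary expectation against $f_{G_0}$.

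Next, I would handle the truncation carefully. By the definition of $\SNR$ in \eqref{eq:SNR}, whenever $R_0 > \rmax$ we have $\SNR = 0$ and thus $\log_2(1+\SNR)=0$, so those realizations contribute nothing to the expectation. Hence the outer integral over $R_0$ can be restricted to $[\rmin,\rmax]$, while on this range $\SNR = \ps G_0 R_0^{-\alpha}/\sigma^2$. Writing out both expectations as integrals against the densities gives
\begin{align*}
\bar{C} = \int_{\rmin}^{\rmax}\!\!\int_0^{\infty}\log_2\!\left(1+\frac{\ps g_0 r_0^{-\alpha}}{\sigma^2}\right) f_{G_0}(g_0)\, f_{R_0}(r_0)\,dg_0\,dr_0.
\end{align*}

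The final step is a direct substitution: plug in the expression for $f_{R_0}(r_0)$ from Lemma~\ref{lem:pdf_r0}, and convert $\log_2(\cdot)$ to $\ln(\cdot)/\ln(2)$ to pull a factor of $1/\ln(2)$ outside. Collecting the constant prefactor $N/(2\rEarth(\rEarth+\rmin))$ coming from $f_{R_0}$ together with the $1/\ln(2)$ yields exactly the claimed formula.

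There is no real obstacle here; the only point requiring some care is the treatment of $R_0 \in (\rmax, 2\rEarth+\rmin]$, where $f_{R_0}$ is nonzero but the log term vanishes because $\SNR=0$. Once that is observed, the proposition follows from Lemma~\ref{lem:pdf_r0} and Fubini's theorem (applied to the nonnegative integrand $\log_2(1+\SNR)\, f_{G_0}\, f_{R_0}$) to exchange the order of the two expectations.
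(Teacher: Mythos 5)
Your proposal is correct and follows essentially the same route as the paper: iterated expectation over $R_0$ and $G_0$, substitution of $f_{R_0}$ from Lemma~\ref{lem:pdf_r0}, and conversion of $\log_2$ to $\ln/\ln(2)$. Your explicit justification of truncating the $r_0$-integral at $\rmax$ (because $\SNR=0$ there despite $f_{R_0}$ being nonzero) is a welcome detail that the paper leaves implicit.
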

\begin{proof}
Taking the expectation over serving distance and channel gain, we have
\begin{align}
\label{eq:24}
\nonumber
&\bar{C}=\mathbb{E}_{G_0,R_0}\left[\log_2\left(1+\SNR\right)\right]\\\nonumber
&=c_0\int_{\rmin}^{\rmax}\mathbb{E}\left[\ln\left(1+\frac{\ps G_0r_0^{-\alpha}}{\sigma^2}\right)\right]\\
&\hspace{45pt}\times\left(1-\frac{r_0^2-\rmin^2}{4\rEarth(\rEarth+\rmin)}\right)^{N-1}r_0\,dr_0,
%&=c_0\,\Prob(\Nvis>0)\int_{\rmin}^{\rmax}\int_{0}^{\infty}\ln\left(1+\frac{\ps g_0r_0^{-\alpha}}{\sigma^2}\right)f_{G_0}(g_0)\\
%&\hspace{45pt}\times\left(1-\frac{r_0^2-\rmin^2}{4\rEarth(\rEarth+\rmin)}\right)^{N-1}r_0\,dg_0\,dr_0,
\end{align}
where $c_0=\frac{N}{2 \ln(2)\rEarth(\rEarth+\rmin)}$.
%(a) follows from the fact that for a positive random variable $X$, $\mathbb{E}\left[X\right]=\int_{t>0}\Prob\left(X>t\right)dt$. 
\end{proof}

\subsection{Effective Number of Satellites}

Due to the fact that satellites in practical constellations are distributed unevenly along different latitudes, i.e., the number of satellites is effectively larger on the inclination limit of the constellation than on equatorial regions, the density of practical deterministic constellations is typically not uniform. Thus, we define and calculate a new parameter, the effective number of satellites, $\Neff$, for every satellite latitude in order to compensate for the uneven density w.r.t.\ practical inclined constellations and create a tight match between the results generated by uniform modeling and those from practical constellation simulations.

%\theoremstyle{definition}
%\begin{definition}{Effective number of satellites %($\Neff$)} is the constellation size that corresponds to %a satellite density observed by a user on a specific %latitude.
%\end{definition}

\begin{proposition}
\label{prop:Neff}
Let the effective number of satellites ($\Neff$) be the constellation size that corresponds to a satellite density observed by a user on a specific latitude assuming the same density continues everywhere. The effective number of satellites can then be determined as
\begin{align}
\Neff \triangleq \frac{2\,f_{\Phisat}(\phisat)}{\cos(\phisat)} \cdot \Nact,
\end{align}
where random variable $\Phisat$ denotes the latitude of a satellite and $f_{\Phisat}(\phisat)$ corresponds to its PDF.
\end{proposition}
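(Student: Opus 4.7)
The plan is to derive $\Neff$ by matching the local surface density of satellites at the user's latitude between the actual inclined constellation and a hypothetical uniform constellation of size $\Neff$, and then solving for $\Neff$. The motivation is that Lemma~\ref{lem:pdf_r0} and Propositions~\ref{prop:coverage_prob}--\ref{prop:data_rate} all assume satellites uniform on the orbital shell; those formulas should carry over verbatim to a practical inclined constellation provided $N$ is replaced by whatever uniform-constellation size locally reproduces the density that the user actually sees.

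First I would parametrize the orbital shell of radius $\rEarth+\rmin$ by the satellite latitude $\phisat$ together with an auxiliary longitude, and then invoke longitudinal symmetry (which holds to excellent approximation for any constellation with many evenly spaced orbital planes) to conclude that the local satellite density depends only on $\phisat$. The thin latitude ring $[\phisat,\phisat+d\phisat]$ on the shell has surface area $2\pi(\rEarth+\rmin)^{2}\cos(\phisat)\,d\phisat$, while its expected population is $\Nact\, f_{\Phisat}(\phisat)\,d\phisat$. Dividing these gives the local satellite-per-area density
\begin{align}
\rho_{\mathrm{act}}(\phisat) = \frac{\Nact\, f_{\Phisat}(\phisat)}{2\pi(\rEarth+\rmin)^{2}\cos(\phisat)}.
\end{align}

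Next, a uniform constellation of size $\Neff$ on the same shell produces constant density $\rho_{\mathrm{unif}} = \Neff/[4\pi(\rEarth+\rmin)^{2}]$. Imposing $\rho_{\mathrm{act}}(\phisat) = \rho_{\mathrm{unif}}$ and solving for $\Neff$ immediately returns the stated formula.

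The spherical-area manipulation itself is essentially a one-liner; the only real obstacle is conceptual, namely justifying that matching the \emph{local} surface density is the right notion of ``effective size'' (as opposed to, e.g., matching the expected satellite count inside the visible spherical cap, or averaging over a full hemisphere). I would argue this by pointing out that $N$ enters Proposition~\ref{prop:coverage_prob} only through the i.i.d.\ nearest-satellite competition baked into Lemma~\ref{lem:pdf_r0}, and that competition is controlled by how many satellites populate the user's neighborhood on the shell, i.e., by $\rho(\phisat)$ evaluated at $\phisat=\phiuser$. This makes local density-matching the operationally meaningful choice and pins down the formula uniquely.
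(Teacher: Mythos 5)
Your proposal matches the paper's own proof essentially exactly: both equate the actual satellite density on a thin latitude ring, $\Nact f_{\Phisat}(\phisat)/\bigl(2\pi(\rEarth+\rmin)^2\cos(\phisat)\bigr)$, with the uniform-shell density $\Neff/\bigl(4\pi(\rEarth+\rmin)^2\bigr)$ and solve for $\Neff$. The additional discussion of why local density-matching is the operationally meaningful criterion goes beyond what the paper states but does not change the derivation.
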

\begin{proof}
The satellite density observed effectively by a user at any latitude assuming that there are $\Neff$ uniformly distributed satellites in total is 
\begin{align}
\label{eq:delta1}
\deltaeff = \frac{\Neff}{4\pi(\rmin+\rEarth)^2},
\end{align}
where the denominator represents the surface area of the satellites' orbital shell. On the other hand, the actual density of the satellites on a ring surface element at latitude $\phisat$ can be written as
\begin{align}
\label{eq:delta2}
\deltaact = \frac{\Nact f_{\Phisat}(\phisat)\,d\phisat}{2\pi(\rmin+\rEarth)^2\cos(\phisat)\,d\phisat},
\end{align}
where the nominator and denominator represent the number of satellites resided in the surface element and the element's surface area, respectively. Setting $\deltaeff = \deltaact$ and applying some simplifications completes the proof.
\end{proof}

\begin{lem}
\label{lem:pdf_phi}
When the satellites' argument of latitude $U$ is a uniform random variable \cite{53}, i.e., $U\sim\mathcal{U}(-\frac{\pi}{2},\frac{\pi}{2})$, the PDF of satellites' latitude with inclination $\iota$ is given by
\begin{align}
f_{\Phisat}(\phisat) = \frac{\sqrt{2}}{\pi} \cdot \frac{\cos(\phisat)}{\sqrt{\cos(2\phisat)-\cos(2\iota)}}
\end{align}
for $\phisat\in[-\iota,\iota]$ while $f_{\Phisat}(\phisat)=0$ otherwise. 
\end{lem}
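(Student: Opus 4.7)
The plan is to parametrise a satellite's position on its inclined circular orbit by the argument of latitude $U$ measured from the ascending node, use the standard spherical-triangle identity linking $U$ to the geographic latitude, and then apply a one-dimensional change of variables to push the uniform density of $U$ through to $\Phisat$.

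First I would record the geometric fact from spherical trigonometry that for a circular orbit of inclination $\iota$, a point whose argument of latitude is $u$ lies at latitude
\begin{equation*}
\sin(\phisat)=\sin(\iota)\sin(u).
\end{equation*}
Restricting to $u\in(-\pi/2,\pi/2)$ (which by the north--south symmetry of the orbit yields a representative half of the trajectory and matches the hypothesis $U\sim\mathcal{U}(-\pi/2,\pi/2)$), this is a strictly increasing map onto $\phisat\in(-\iota,\iota)$, so the transformation theorem for densities applies without any branch ambiguities.

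Next I would differentiate the identity implicitly, obtaining $\cos(\phisat)\,d\phisat=\sin(\iota)\cos(u)\,du$, and eliminate $\cos(u)=\sqrt{1-\sin^2(\phisat)/\sin^2(\iota)}$ using the same identity. Combining these with $f_U(u)=1/\pi$ gives
\begin{equation*}
f_{\Phisat}(\phisat)=\frac{1}{\pi}\cdot\frac{\cos(\phisat)}{\sqrt{\sin^2(\iota)-\sin^2(\phisat)}}
\end{equation*}
for $\phisat\in[-\iota,\iota]$. The final cosmetic step is the double-angle identity $\sin^2(x)=(1-\cos(2x))/2$, which turns $\sin^2(\iota)-\sin^2(\phisat)$ into $(\cos(2\phisat)-\cos(2\iota))/2$ and, after pulling the $\sqrt{2}$ outside, produces exactly the claimed form.

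I do not anticipate a real obstacle here: the derivation is a one-line change of variables once the spherical-trig identity $\sin(\phisat)=\sin(\iota)\sin(U)$ is invoked. The only mild care needed is to justify the restriction $U\in(-\pi/2,\pi/2)$ (rather than $[0,2\pi)$), which is legitimate because the full orbit visits each latitude in $(-\iota,\iota)$ exactly twice per revolution with equal sojourn time, so the resulting marginal PDF of $\Phisat$ is the same whether one uses the half-orbit uniform model adopted in \cite{53} or the full-orbit one. The support boundary $\phisat=\pm\iota$ appears as an integrable inverse-square-root singularity, consistent with the classical observation that satellites on an inclined orbit spend disproportionately more time near the extreme latitudes of their ground tracks---which is precisely the non-uniformity that Proposition~\ref{prop:Neff} sets out to compensate for.
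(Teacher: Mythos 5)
Your proof is correct, and the computation checks out: the change of variables gives $f_{\Phisat}(\phisat)=\frac{1}{\pi}\cos(\phisat)/\sqrt{\sin^2(\iota)-\sin^2(\phisat)}$, which integrates to one over $[-\iota,\iota]$ and reduces to the stated form via the double-angle identity. The paper's proof is the same in spirit---a one-dimensional transformation of the uniform variable $U$ into $\Phisat$---but it obtains the link between the two variables differently: instead of invoking the spherical-triangle identity $\sin(\phisat)=\sin(\iota)\sin(U)$, it explicitly rotates the parametrized orbital circle $(\cos U,\sin U,0)$ by an $\iota$-rotation about the $y$-axis and reads off the latitude as $\tan^{-1}(z_s/\sqrt{x_s^2+y_s^2})$, then cites the generic transformation formula without carrying the algebra to the closed form. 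Notably, the paper's choice of axes yields $\sin(\Phisat)=-\cos(U)\sin(\iota)$, i.e., $U$ is measured from the point of extreme latitude rather than from the ascending node; on $U\in(-\tfrac{\pi}{2},\tfrac{\pi}{2})$ that map is neither injective nor onto $[-\iota,\iota]$ (it only reaches nonpositive latitudes), so an implicit north--south symmetrization is needed to recover the stated support. Your parametrization from the ascending node makes the map strictly monotone onto $(-\iota,\iota)$, sidesteps that branch issue entirely, and your closing remark about the half-orbit versus full-orbit models supplies exactly the symmetry justification the paper leaves unstated. In short: same underlying technique, but your route is cleaner and more complete.
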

\begin{proof}
Since the distribution of the argument of latitude is known, we need to find the satellite's latitude as a function of the argument of latitude and inclination angle. The satellite's coordinates can be obtained by multiplication of $\iota$-degree rotation matrix and satellites' orbital plane:
\begin{align}
\nonumber
\begin{bmatrix}
x_s\\
y_s \\
z_s
\end{bmatrix}
&=
\begin{bmatrix}
\cos{(\iota)}& 0 & \sin{(\iota)}\\
0 & 1 & 0\\
-\,\sin{(\iota)}& 0 & \cos{(\iota)}
\end{bmatrix}
    \begin{bmatrix}
(\rmin+\rEarth)\cos{(U)}\\
(\rmin+\rEarth)\sin{(U)} \\
0
\end{bmatrix}\\
&= 
\begin{bmatrix}
(\rmin+\rEarth)\cos{(U)}\cos{(\iota)}\\
(\rmin+\rEarth)\sin{(U)} \\
-\,(\rmin+\rEarth)\cos{(U)}\sin{(\iota)}
\end{bmatrix}.
\label{eq:rotation_matrix}
\end{align}
Therefore, the latitude of the satellite is given as
\begin{align}
    \label{eq:inclination_func}
    \Phisat = g(U) &=\tan^{-1}\left(\frac{z_s}{\sqrt{x_s^2+y_s^2}}\right)\\
    \nonumber
    &=\tan^{-1}\left(-\,\frac{\cos{(U)}\sin{(\iota)}}{\sqrt{\cos^2{(U)}\cos^2{(\iota)}+\sin^2{(U)}}}\right).
\end{align}
The PDF of $\Phisat$ can be written as 
\begin{align}
    \label{eq:PDF_phi}
    f_{\Phisat}(\phisat) = f_{U}(g^{-1}(\phisat))\frac{d}{d\phisat}\left(g^{-1}(\phisat)\right)
    %&=\frac{1}{\pi}\cdot\frac{\sqrt{2}\cos{(\phisat)}}{\sqrt{\cos{(2\phisat)}-\cos{(2\iota)}}}
\end{align}
by the transform of random variables.
\end{proof}

Thus, when the satellites' argument of latitude is uniform and their inclination is $\iota$, the effective number of satellites can be obtained by using Lemma~\ref{lem:pdf_phi} in Proposition~\ref{prop:Neff} as follows:
\begin{align}
\Neff = \frac{2\sqrt{2}}{\pi}\cdot\frac{1}{\sqrt{\cos(2\phisat)-\cos(2\iota)}} \cdot \Nact.
\end{align}
With high orbit inclination, the effective number of satellites matches to the true number of satellites at some latitudes, decreasing monotonically toward the equator and increasing monotonically toward the poles. By setting $\Nact=\Neff$, we can solve these special latitudes (one for each hemisphere) as
\begin{align}
\phisat = \pm\frac{1}{2} \cos^{-1}\left(\frac{8}{\pi^2} +\cos(2\iota)\right),
\end{align}
if $\iota \geq \frac{1}{2}\cos^{-1}\left(1-\frac{8}{\pi^2}\right) \approx 39.5^{\circ}$ and otherwise $\Neff > \Nact$ for all $\phisat \leq \iota$.

% below is not actually correct (?) after fixing the minimum inclination... but depends on altitude. I think there should be highest latitude vs. inclination and altitude expression in Section II that could be used for explaining that coverage extends beyond 40 in the below case
%This also shows that, with inclined orbits, we can deploy effectively much higher number of satellites than there really is for all the habitable regions of Earth at the cost losing coverage only at North Greenland, Arctic Ocean, and Antarctic.
%{\color{red} Conversely, the maximum loss at equator when polar coverage and extreme satellite density.}

In the special case of having polar orbits (i.e., $\iota=90^\circ$), the PDF of latitude would be the same as the argument of latitude, i.e., $\phisat=U\sim\mathcal{U}(-\frac{\pi}{2},\frac{\pi}{2})$ for all $\phisat$ values. Thus, $\Neff = \frac{2/\pi}{\cos(\phisat)} \cdot \Nact$. For instance, at equator, ($\phisat=0^{\circ}$), $\Neff \approx 0.64\, \Nact$ and by increasing the latitude up to $\phisat \approx  50.5^{\circ}$, we will have $\Neff \approx \Nact$. Finally $\Neff$ will approach to infinity at poles where all satellite orbits cross. The authors up north ($\phiuser=61.5^{\circ}$) at Tampere, Finland observe effectively $30$\% more satellites than there are in reality.

%-----------------------------------------------------------------------
\section{Numerical Results}
\label{sec:Numerical Results}
%------------------------
\begin{table}[!t]
\label{Table: simulation parameters}
\begin{center}
\label{tab:1}
\caption{{Simulation Parameters}}
\begin{tabular}{l|cc}
\hline
\hline
\textbf{Parameters}&\textbf{Values}\\
\hline
Path loss exponent, $\alpha$&2\\
\hline
Rician factor, $K$&100\\
\hline
CDF of channel gains, $F_{G_n}(g_n)$&$1-Q_{1}\left(\sqrt{2K},\sqrt{g_n}\right)$\\
\hline
PDF of channel gains, $f_{G_n}(g_n)$ &$\frac{1}{2}e^{-\frac{g_n+2K}{2}}I_0\left(\sqrt{2Kg_n}\right)$\\
\hline
Transmission power, $\ps$ (W)& 10\\
\hline
Noise power, $\sigma^2$ (dBm)&-93 \\
\hline
 User's latitude, $\phiu$ (degree)&0\\
\hline
\hline
\end{tabular}
\vspace{-0.5 cm }
\end{center}
\end{table}
%------------------------

 The propagation model takes into account the large-scale attenuation with path loss exponent $\alpha=2$, as well as the small-scale fading. To take into account a wider range of fading environments, the channels are assumed to follow Rician fading with parameter $K=100$, where $K$ is the ratio between the direct path received power and other, scattered, paths. The parameter $K$ can be determined according to the type of constellation, i.e., higher $K$ values is suitable when the serving satellite is likely high above the user. As a result, the corresponding channel gains, $G_n$, (being the square of the Rice random variable) have a noncentral chi-squared distribution, $\mathcal{X}^2$, with two degrees of freedom and non-centrality parameter $2K$. Therefore, the CDF and PDF in Propositions~\ref{prop:coverage_prob} and \ref{prop:data_rate} are
\begin{align}
F_{G_0}(g_0) &= 1-Q_{1}\left(\sqrt{2K},\sqrt{g_0}\right),\\
f_{G_0}(g_0) &= \frac{1}{2}e^{-\frac{g_0+2K}{2}}I_0\left(\sqrt{2Kg_0}\right),
\end{align}
respectively, where $Q_{1}(\cdot,\cdot)$ denotes the Marcum Q-function and $I_0(\cdot)$ is the modified Bessel function of the first kind. For producing the numerical results, the transmitted and noise power are set to $\ps=10$~W and $\sigma^2=-93$~dBm, respectively. {The simulation parameters are summarized in Table~I.}

\begin{figure}[t]
    \includegraphics[trim = 8mm 0mm 2mm 0mm, clip,width=0.5\textwidth]{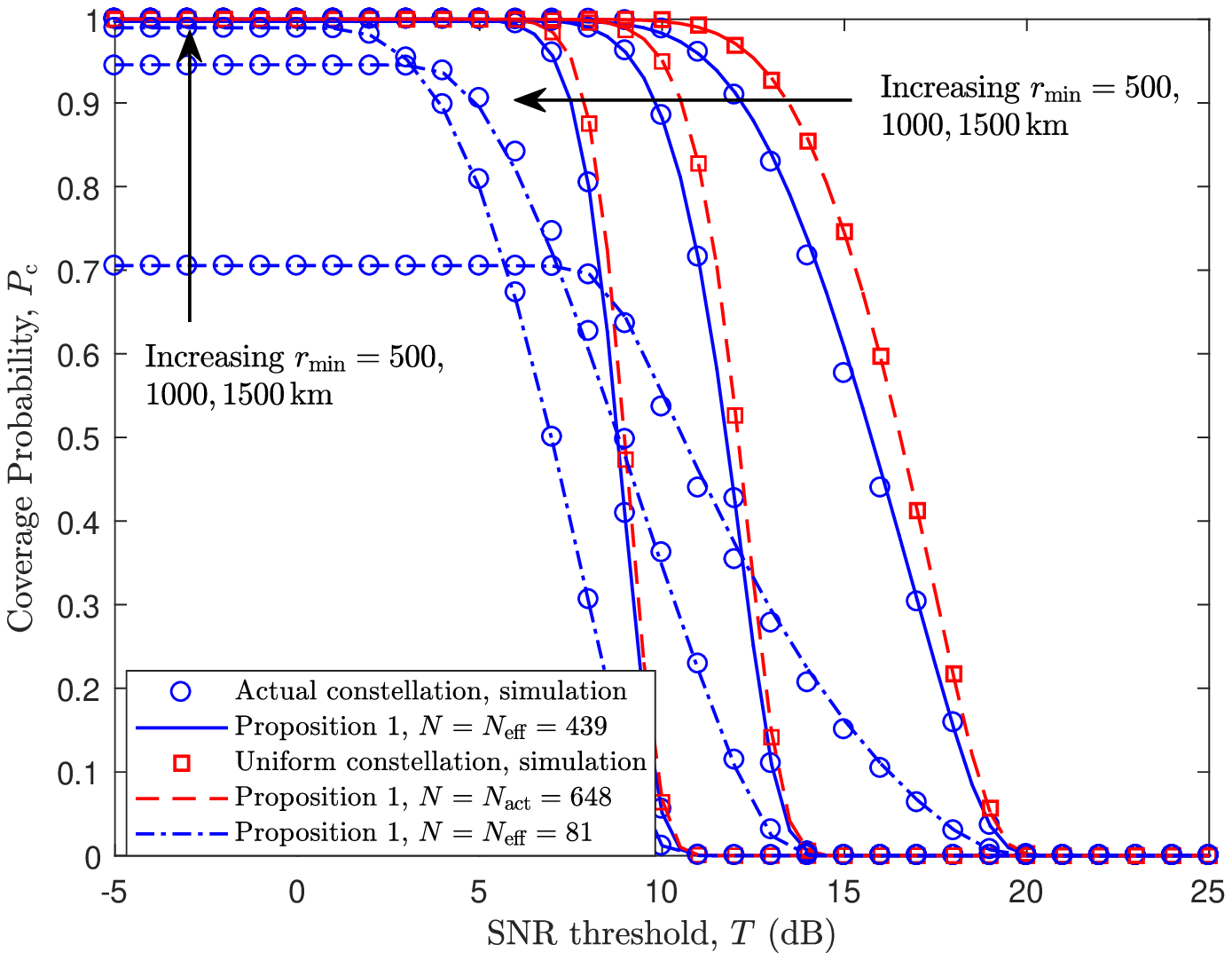}
    \caption{Verification of Proposition~1 with simulations when $K=100$, $\phiuser=0^\circ$, $\iota=70^\circ$, $\rmin\in\{500, 1000, 1500\}$~km, and $\thmin=10^\circ$.}
    \label{fig:CoverageVsT}
    \vspace{-0.5 cm }
\end{figure}

{For numerical verification, we compute the coverage probability and data rate of an actual LEO constellation through Monte Carlo simulations in Matlab to compare them with analytical results presented in this paper.} Figure~\ref{fig:CoverageVsT} verifies the coverage expression given in Proposition~\ref{prop:coverage_prob}, considering different altitudes for $\Nact=648$ and 120. As shown in this figure, there is a slight deviation between the actual constellation described in Section~\ref{sec:sysmod} and uniform constellation performance due to the non-uniform distribution of satellites along different latitudes in the real constellation. Substituting $N=\Neff=439$ and 81 in Proposition~\ref{prop:coverage_prob} which corresponds to $\Nact=648$ and 120, respectively, we can eliminate the mismatch in the coverage. 

For a fewer number of satellites, e.g., $\Nact=120$, it can be well observed from Fig.~\ref{fig:CoverageVsT} that the upper bound for coverage probability, given in Corollary~1, is limited by the probability of observing at least one satellite above the sky. {As a result, the upper bound is enhanced with rising the altitude due to the increase in the visibility probability given in \eqref{eq:pv}. On the other hand, for larger number of satellites, e.g., $\Nact=648$, the performance is affected only by the path loss since the visibility probability approaches one.}
Verification of data rate in Proposition~\ref{prop:data_rate} is shown in Fig.~\ref{fig:datarateVsPt} for different minimum elevation angles. The same as for Fig.~\ref{fig:CoverageVsT}, the mismatch between uniform and actual constellation is omitted by setting $N=\Neff=439$.

Coverage probability versus the total number of satellites for different inclination and minimum required elevation angles is depicted in Fig.~\ref{fig:CoverageVsN}. For plotting this figure, we applied $N=\Neff$ in Proposition~\ref{prop:coverage_prob} in order to compensate for the uneven distribution of satellites along different latitudes. The coverage probability declines with $\thmin$ as the visibility to the user decreases. However, this effect becomes less dominant as the number of satellites increases since the serving satellite, most probably, will be located above the user. Moreover, within the depicted range, the smaller inclination angles result in superior performance due to the larger density of satellites and, consequently, the existence of a stronger serving channel.

There is an optimum altitude for every constellation, as shown in Fig.~\ref{fig:CoverageVsH}, which results in maximum coverage probability. The optimum point increases with rising the minimum elevation angle while the maximum achieved coverage decreases accordingly. The initial increase in the plot is due to the enhancement in the line-of-sight probability of the serving satellite while it is followed by a decline caused by more severe path loss in higher altitudes.

Above results are repeated in terms of data rate in Figs.~\ref{fig:datarateVsN} and~\ref{fig:datarateVsH} w.r.t.\ the total number of satellites and satellite altitude, respectively, using Proposition~\ref{prop:data_rate} with $N=\Neff$. The same as for Fig.~\ref{fig:CoverageVsN}, lower inclination will result in higher data rates in Fig.~\ref{fig:datarateVsN}. However, the impact of both inclination and minimum elevation angle on data rate reduces with increasing the total number of satellites. The same observations as for Fig.~\ref{fig:CoverageVsH} can be also seen in Fig.~\ref{fig:datarateVsH}, except for the optimum altitude differs for maximum coverage probability and data rate.

%\begin{figure}
   % \includegraphics[trim = 5mm 0mm 2mm 0mm, clip,width=0.5\textwidth]{NeffVsLatV1.eps}
   % \caption{Effective number of satellites and inclinations angle which results in $N=\Neff$. The true number of satellites is set to 288.}
   % \label{fig:NeffVslat}
%\end{figure}

\begin{figure}[t]
    \includegraphics[trim = 8mm 0mm 2mm 0mm, clip,width=0.5\textwidth]{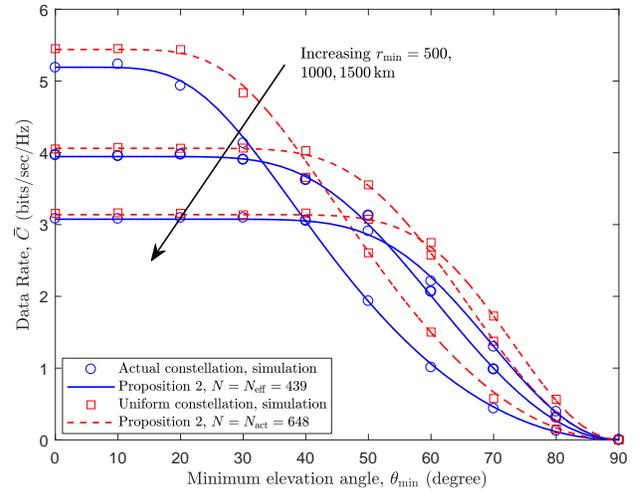}
    \caption{Verification of Proposition~2 with simulations when $K=100$, $\phiuser=0^\circ$, $\iota=70^\circ$, and $\rmin\in\{500, 1000, 1500\}$~km.}
    \label{fig:datarateVsPt}
    %\vspace{-0.5 cm}
\end{figure}

\begin{figure}[!t]
    \includegraphics[trim = 8mm 0mm 2mm 7mm, clip,width=0.5\textwidth]{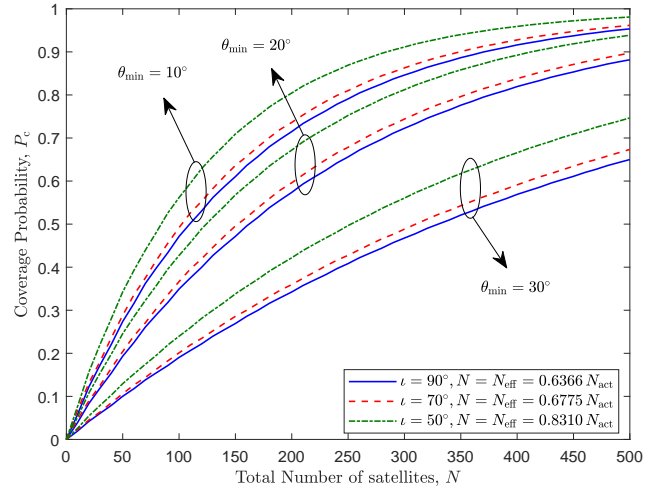}
    \caption{Coverage probability for different constellation sizes when $K=100$, $\phiuser=0^\circ$, $T=10$~dB, and $\rmin=500$~km.}
    \label{fig:CoverageVsN}
    %\vspace{-0.5 cm }
\end{figure}

%\vspace{20pt}
%-----------------------------------------------------------------------
\section{Conclusions}
\label{sec:Conclusion}
In this paper, we presented a tractable approach for uplink and downlink coverage and rate analysis of low Earth orbit satellite networks. The satellite network is, first, modeled with a uniform distribution which was then applied to obtain exact expressions for coverage probability and data rate of an arbitrary user in terms of network parameters. The slight deviation between the performance metrics of the uniform and actual constellations was compensated by derivation of a new parameter---effective number of satellites---to take into account the effect of uneven satellite distribution along different latitudes. The proposed framework in this paper paves the way for accurate analysis, optimization and design of the future dense satellite networks. 

\vfill

%The scalable results presented here are not only applicable to a single constellation but also to massive satellite networks that merge several different constellations.

%\begin{equation}
%    \Neff(\phiuser)=k_0N\int_{\phiuser-\delta}^{\phiuser+\delta} f_{\phisat}(\phisat)d\phisat
%\end{equation}

%\clearpage

%\newpage
%-----------------------------------------------------------------------
%\IEEEtriggeratref{12}

\begin{figure}[!t]
    \includegraphics[trim = 8mm 0mm 2mm 0mm, clip,width=0.5\textwidth]{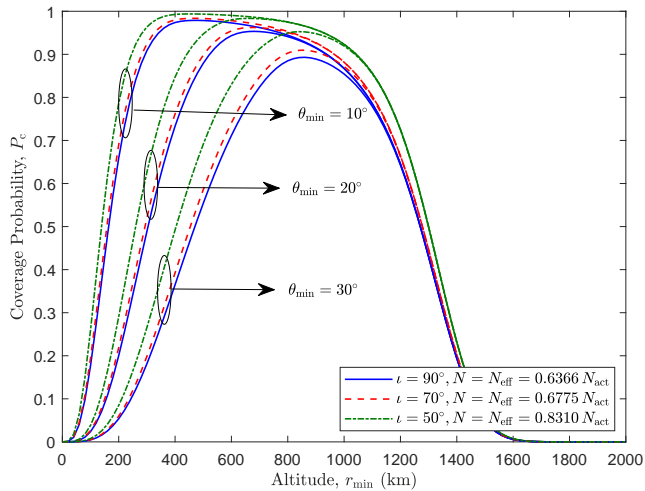}
    \caption{Coverage probability for different altitudes when $K=100$, $\phiuser=0^\circ$, $T=10$~dB, and $\Nact=648$.}
    \label{fig:CoverageVsH}
    %\vspace{-0.5 cm }
\end{figure}

\begin{figure}[!t]
    \includegraphics[trim = 8mm 0mm 2mm 0mm, clip,width=0.5\textwidth]{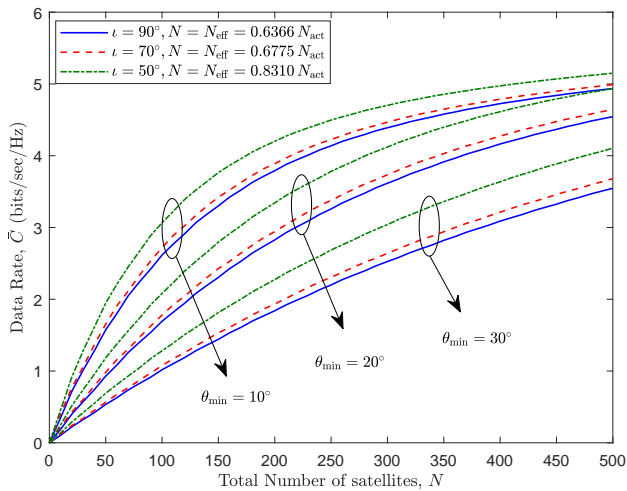}
    \caption{Data rate for different constellation sizes when $K=100$, $\phiuser=0^\circ$, and $\rmin=500$~km.}
    \label{fig:datarateVsN}
    %\vspace{-0.5 cm }
\end{figure}

\begin{figure}[!t]
    \includegraphics[trim = 8mm 0mm 2mm 0mm, clip,width=0.5\textwidth]{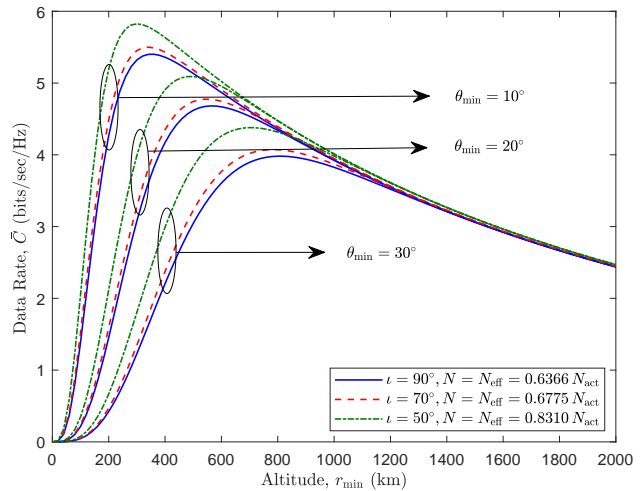}
    \caption{Data rate for different altitudes when $K=100$, $\phiuser=0^\circ$, and $\Nact=648$.}
    \label{fig:datarateVsH}
    %\vspace{-0.5 cm }
\end{figure}

\bibliography{ref} 
\bibliographystyle{IEEEtran}

\end{document}